\newcommand{\stot}{S_{tot}}
\newcommand{\stake}{S}
\newcommand{\numvalidators}{N}
\newcommand{\palg}{\mathcal{P}_{alg}}
\newcommand{\psoc}{\mathcal{P}_{soc}}
\newcommand{\pecon}{\mathcal{P}_{econ}}
\newcommand{\onethird}{\frac{1}{3}}
\newcommand{\ws}{T_{ws}}
\newcommand{\tvl}{TVL}
\newcommand{\reversion}{T_{rev}}
\newcommand{\advfork}{\mathcal{C}_{adv}}
\newcommand{\txset}{\Gamma}
\newcommand{\hybtxset}{\Gamma_{hyb}}
\newcommand{\hybsecconftxset}{\Gamma_{hyb, \secconf}}
\newcommand{\secconf}{\Pi_{sec}}
\newcommand{\secbri}{\Pi_{sec-bridge}}
\newcommand{\crw}{T_{cr}}
\newcommand{\setclient}{\Theta}
\newcommand{\client}{tr}
\newcommand{\clienthybtxset}{\Gamma_{\client}}
\newcommand{\hybuninstxset}{\Gamma_{un-ins}}
\newcommand{\sins}{S_{ins}}
\newcommand{\sinsavail}{S_{ins, avail}}
\newcommand{\inspur}{\Pi_{ins-pur}}
\newcommand{\inssec}{\Pi_{ins-sec}}
\newcommand{\clientinspur}{u}
\newtheorem{lemma}{Lemma}
\title{STAKESURE: Proof of Stake Mechanisms with Strong Cryptoeconomic Safety.}
\author{
  Soubhik Deb\\
  \texttt{EigenLabs}\\
  \texttt{soubhik@eigenlabs.org}
  \and
  Robert Raynor\\
  \texttt{EigenLabs}\\
  \texttt{rraynor@eigenlabs.org}
  \and 
  Sreeram Kannan\\
  \texttt{EigenLabs}\\
  \texttt{sreeram@eigenlabs.org}
}
\date{}
\begin{document}

\maketitle
\begin{abstract}
    As of June 15, 2023, Ethererum, which is a Proof-of-Stake (PoS) blockchain \cite{buterin2022proof} has around 410 Billion USD in total assets on chain (popularly referred to as total-value-locked, TVL) but has only ~33 Billion USD worth of ETH staked in securing the underlying consensus of the chain \cite{ultrasound}. A preliminary analysis might suggest that as the amount staked is far less (11x less) than the value secured, the Ethereum blockchain is insecure and “over-leveraged” in a purely cryptoeconomic sense. In this work, we investigate how Ethereum, or, more generally, any PoS blockchain can be made secure despite this apparent imbalance. Towards that end, we attempt to formalize a model for analyzing the cryptoeconomic safety of PoS blockchain, which separately analyzes the cost-of-corruption, the cost incurred by an attacker, and the profit-from-corruption, the profit gained by an attacker. We derive sharper bounds on profit-from-corruption, as well as new confirmation rules that significantly decrease this upper-bound. We evaluate cost-of-corruption and profit-from-corruption only from the perspective of attacking safety. Finally, we present a new “insurance” mechanism, STAKESURE, for allocating the slashed funds in a PoS system, that has several highly desirable properties: solving common information problem in existing blockchains, creating a mechanism for provably safe bridging, and providing the first sharp solution for automatically adjusting how much economic security is sufficient in a PoS system. Finally, we show that the system satisfies a notion of strong cryptoeconomic safety, which guarantees that no honest transactor ever loses money, and creates a closed system of Karma, which not only ensures that the attacker suffers a loss of funds but also that the harmed parties are sufficiently compensated. 
\end{abstract}

\section{Introduction}
\label{sec:intro}
The popular website ultrasound.money used to track various metrics on the Ethereum blockchains, has a special metric called the “security ratio” \cite{ultrasound}.  This essentially represents the total-value-locked (TVL) into all the assets on Ethereum divided by the total stake that is securing those assets in the underlying consensus of Ethereum. This security ratio is a whooping 11x as of June 1, 2023. At first guess, one would expect if this security ratio to be greater than 1x, then the blockchain is insecure, .  One can make identical observations about many other Proof-of-Stake (PoS) blockchains. Naturally, the question that one would ask is how is it possible that even though the amount of stake that is securing the consensus of such PoS blockchains is significantly less than the amount it is securing in the first place, these PoS blockchains are still safe and live? Is it merely that the blockchains have not yet been attacked, or is there a more fundamental reason, why the blockchain *cannot* be attacked. 

To understand this puzzle, we define two core economic metrics of PoS blockchains:
\begin{itemize}
    \item \textbf{Cost-of-corruption.} The minimum cost / loss that needs to be incurred by any adversary to mount any successful attack against the system,
    \item \textbf{Profit-from-corruption.} The maximum profit that an adversary can extract by being able to successfully attack the system.
\end{itemize}
If we have these two metrics at hand, then it is clear that the system is secure as long as 
\begin{equation}
    \underline{\textbf{Cryptoeconomic Safety:}} \quad \text{cost-of-corruption} > \text{profit-from-corruption}.
\end{equation}
As a system designer for a PoS blockchain, the goal would be to increase the cost-of-corruption as much as possible while substantially reducing the profit-from-corruption.

PoS blockchains like Ethereum have increased their cost-of-corruption by incorporating slashing into their mechanism design. Slashing offers a means to economically penalize any particular validator in a targeted manner for not taking a blockchain-compliant action. This is done by taking away some or all of the validator’s stake, without imposing externalities on other validators who are behaving according to the rules of conduct set for participating in a blockchain. Slashing is unique to PoS blockchains because it requires the ability for the blockchain to enforce the penalty. Such enforcement is clearly infeasible in proof-of-work blockchains, where it would be analogous to burning the mining hardware used by misbehaving validators. 

Decreasing profit-from-corruption is much subtler. This requires identifying the type of transactions whose confirmation can be used by an adversary to drain value out of the blockchain. Such transactions are those which require some follow-up off-chain action on the part of some client. Examples are centralized exchanges (CEXs) and bridges. The risk of executing the corresponding off-chain actions of such transactions too soon, even if they are finalized, is that a sufficient number of adversarial validators can engage in double-signing to build a separate finalized fork (despite the negative incentive of slashing) consisting of double-spending transaction while attempting to sow confusion among the clients in the ecosystem about which fork to be considered as “canonical”. Minimizing profit-from-corruption would require coming up with a confirmation rule on when to execute such transactions.

While existing staking and slashing systems burn the funds when there is an attack, thus causing the adversary to take a hit, there is no compensation to the harmed parties, i.e, honest transactors. This makes the blockchain system not have a closed loop of Karma. We also define a stronger notion of cryptoeconomic safety, which ensures that there is no unintended loss of funds to a honest party: 

\begin{itemize}
    \item[] \underline{\textbf{Strong cryptoeconomic safety:}} No honest user of the system suffers any loss of funds. 
\end{itemize}
We note that even blockchains that satisfy cryptoeconomic safety may not satisfy the notion of strong cryptoeconomic safety. While cryptoeconomic safety ensures that there is no incentive for an adversary to attack, a malicious adversary may still go ahead and attack the system which will lead to honest users in the system suffering without recourse. In contrast, in a system with strong cryptoeconomic safety, this can never happen. 

\subsection{Contributions and organization}
The contributions of this article are:
\begin{enumerate}
    \item Mathematical model for cryptoeconomic safety in sec.~\ref{sec:modeling}.
    \item In sec.~\ref{sec:coc}, we calculate cost-of-corruption from the persective of attacking safety. We show how slashing can help in increasing this cost as compared to other economic mechanisms commonly used. 
    \item In sec.~\ref{sec:pfc}, we show how to obtain sharp bounds on profit-from-corruption from the perspective of attacking the safety of a PoS blockchain. This involve sharply defining what family of transactions should be considered for computing the profit-from-corruption. Then, we proposed a confirmation rule for clients on when to consider such transactions confirmed. 
    \item In sec.~\ref{sec:insurance}, we propose a new economic mechanism for PoS blockchains, called as STAKESURE, which can be used for obtaining staking insurance for transactors thus guaranteeing strong cryptoeconomic safety, which ensures that all transactors get unconditional protection from attacks on safety.

\end{enumerate}

\section{Modeling}
\label{sec:modeling}

\subsection{Modeling consensus}
At a high-level, consensus in blockchain is a social process among a distributed set of validators to come to an agreement on what should be the next set of updates to the shared state. However, some aspects of consensus feature “algorithmic-attributability”, that is, certain faults can be detected via algorithms running in softwares without requiring any human-in-the-loop. Examples of such faults are double-signing, verification of execution done by a zero-knowledge (ZK) rollup and an optimistic rollup (if data is available), etc. Furthermore, in such faults, the adversary can be identified in an objective manner and can be penalized. Obviously, the algorithmic-attributable nature of such faults offers an opportunity to take advantage of superior computational power of machines to come to agreement on what should be the next updates to the shared state very rapidly. 

On the other hand, there are aspects of consensus which are “socially-attributable” in nature. In such cases, faults become apparent at a much larger timescale and any resolution of those faults require extensive social coordination among the clients in the blockchain ecosystem and not just the validators who are running the full node softwares.  Examples of such faults are loss of ledger growth and censorship. In social-attributable faults, social coordinations feature usage of social media to have discussion and lively debates on any faults/attacks arising in the blockchain and its peripheral infrastructure. This usage of social media helps in building a schelling point to coordinate the actions of honest clients in determining what fork of the blockchain should be considered as the “canonical” one.

Traditionally, while analyzing the security of PoS blockchains, it has been assumed to consist of only a consensus protocol where a group of $\numvalidators$ validators are each running full node software to come to consensus on what new block of transactions should be added to the ledger. Security analysis of the underlying consensus protocol would then provide the bounds under which the PoS blockchain would be considered to be safe and live. However, what has been woefully neglected in these security analysis is the supporting set of clients who are running light client software. Whenever there is a socially-attributable fault such as liveness attack, clients running light client software or full node software socially coordinate with each other via social media to resolve the fault. 

In order to do a comprehensive understanding of cryptoeconomics of PoS blockchains, we categorize coordination mechanisms underlying a blockchain into two parts:
\begin{itemize}
    \item \textbf{Algorithmic consensus ($\palg$).} This consists of special clients called as validator, each of whom are running full node software to participate in an algorithmic consensus protocol like a BFT protocol or a longest-chain protocol to reach resolutions on what should be the next update to the shared state in the presence of algorithmically-attributable faults like safety attacks. For this paper, we assume that $\palg$ is a BFT protocol with single-slot finality and adversarial threshold of $\frac{1}{3}$.
    \item \textbf{Social consensus ($\psoc$).} This consists of clients running full node or light client software to engage in social coordination via any channels (like social media) in order to resolve faults of “socially-attributable” in nature. Social consensus happens at a much longer time horizon than algorithmic consensus \cite{buterin2022proof}.
\end{itemize}
On top of coordination mechanism, we represent the economic aspects of a PoS blockchain involving positive and negative incentives via $\pecon$. 

\subsection{Models for analyzing security}
In order to understand why PoS blockchains like Ethereum are secure, we first need a model under which we will pursue our analysis. Two of the most popular models for analyzing PoS blockchains, the Byzantine model and the game-theoretic equilibrium model, fail to capture economic aspects of real-world attacks. In this section, we discuss these existing models to understand their shortcomings, and present a third model – what we call the Corruption-Analysis Model – based on separately evaluating the bounds on the minimum cost that has to be incurred and the maximum profit that can be extracted from corrupting the blockchain. Despite its ability to model large swathes of attacks, the Corruption-Analysis Model has not yet been used for analyzing any blockchain. 

\subsubsection{Existing models}
In this section, we provide a brief description of Byzantine and game-theoretic equilibrium models and their shortcomings.
\newline

\noindent \underline{Byzantine model} \newline
The Byzantine model stipulates that at most a certain fraction ($\beta$) of validators can deviate from the $\palg$-prescribed actions and pursue any action of their choice, while the rest of the validators remain compliant with $\palg$. Proving that $\palg$ is resilient against a whole space of Byzantine actions that an adversarial validator can take is a non-trivial problem.

For example, consider $\palg$ to be a longest-chain PoS consensus protocol where liveness is prioritized over safety. Early research on security of longest-chain consensus focused on showing security against only one specific attack – the private double-spend attack, where all Byzantine validators collude to build an alternative chain in private and then reveal it much later once it is longer than the original chain. The nothing-at-stake phenomenon, though, offers an opportunity to propose a lot of blocks using the same stake and to use independent randomness to increase the probability of constructing a longer private chain \cite{nas}. Only much later, extensive research was undertaken to show that certain constructions of longest-chain PoS consensus protocols can be made secure against all attacks for certain values of $\beta$ \cite{dembo2020everything, deb2021posat}. 

There is also a whole class of algorithmic consensus protocols called Byzantine Fault Tolerant (BFT) protocols, which prioritize safety over liveness. They also require assuming a Byzantine model for showing that, for an upper bound on $\beta$, these protocols are deterministically safe against any attack \cite{yin2018hotstuff, chan2020streamlet, buchman2018latest}. 

Although helpful, the Byzantine model doesn’t account for any economic incentives. From a behavioral perspective, $\beta$ fraction of these validators are completely adversarial in nature while (1-$\beta$ ) fraction are fully compliant with the specification of $\palg$. In contrast, a significant fraction of validators in a PoS blockchain may be motivated by economic gains and run modified versions of the validator software prescribed by $\palg$ that benefit their self interest rather than simply complying with the full specification of $\palg$. As a salient example, consider the case of Ethereum, where most validators today do not run the default PoS protocol but run the MEV-Boost modification, which results in additional rewards due to participation in a MEV auction market, rather than running the exact protocol specification.  
\newline 

\noindent \underline{Game-theoretic equilibrium model} \newline
The game-theoretic equilibrium model attempts to address the shortcoming of the Byzantine model by using solution concepts like the Nash equilibrium to study whether a rational validator has the economic incentives to follow a given strategy when all other validators are also following the same strategy. More explicitly, assuming everyone is rational, the model investigates two questions: 
\begin{enumerate}
    \item If every other validator is following the prescribed strategy from $\palg$, does it bring the most economic benefit for me to execute upon the same prescribed strategy? 
    \item If every other validator is executing the same deviating strategy, is it most incentive-compatible for me to still follow the prescribed strategy of $\palg$?
\end{enumerate}
Ideally, $\palg$ should be designed such that the answer to both questions is “yes”.

An inherent shortcoming in the game-theoretic equilibrium model is that it excludes the scenario where an exogenous agent might be influencing the behavior of validators. For example, an external agent can set up a bribe to incentivize rational validators to act in accordance with its prescribed strategy. Another limitation is that it assumes that each of the validators has the independent agency to make their own decisions on what strategy to follow based on their ideology or economic incentives. But this doesn’t capture the scenario where a group of validators collude to form cartels or when economies of scale encourage the creation of a centralized entity that essentially controls all staking validators. 
\newline

\noindent \underline{Corruption-Analysis Model} \newline
Several researchers have proposed the Corruption-Analysis Model for analyzing the security of any PoS blockchain, although none have performed a deeper analysis. The model starts by asking the following two questions:
\begin{enumerate}
    \item What is the minimum cost incurred by any adversary for successfully executing a safety or liveness attack on the blockchain?
    \item What is the maximum profit that an adversary can extract from successfully executing a safety or liveness attack on the blockchain?
\end{enumerate}

The adversary in question can be 
\begin{itemize}
    \item a validator that is deviating from PoS blockchain-prescribed strategy unilaterally, 
    \item a group of validators that are actively cooperating with one another to undermine the blockchain, or 
    \item an external adversary attempting to influence the decisions of many validators through some external action such as bribing. 
\end{itemize}

Computing the costs involved requires taking into consideration any cost incurred for bribes, any economic penalty incurred for executing upon a Byzantine strategy, and so on. Similarly, computing profit is all-encompassing, which counts any in-protocol reward obtained by successfully attacking the protocol, any capture of value  from the DApps sitting on top of the PoS blockchain, taking positions on protocol-related derivatives in secondary markets and profiteering from resultant volatility from the attack, and so on.

Comparing a lower-bound on the minimum cost for any adversary to mount an attack (cost-of-corruption) against an upper-bound on the maximum profit that an adversary can extract (profit-from-corruption) indicates when it is economically profitable to attack the protocol. This model has been used for analyzing Augur and Kleros \cite{Augur,kleros}. Thus, we consider a PoS blockchain to cryptoeconomically safe if
\begin{equation}
    \text{cost-of-corruption} > \text{profit-from-corruption}
\end{equation}

\section{Computing cost-of-corruption}
\label{sec:coc}

In this section, we will evaluate the cost-of-corruption for PoS blockchains. Towards that end, we will investigate two mechanism designs that are typically used as a safeguard against attacks to the blockchain: (1) token toxicity and (2) slashing. Let $\stot$ to be the total stake in a PoS blockchain. Using corruption-analysis model, we obtain the following table: 

\begin{table}[h!]
\begin{center}
\begin{tabular}{ |c|c|c| } 
\hline
\cellcolor{blue!25} Mechanism design & \cellcolor{blue!25} Cost-of-corruption \\
\hline
Token toxicity & $0$ \\ 
\hline
Slashing & $\frac{1}{3}\stot$ \\
\hline
\end{tabular}
\caption{Cost-of-corruption across different mechanism designs. Here, $\stot$ represents the total stake in the PoS blockchain.}
\end{center}
\end{table}

Clearly, incorporating slashing into the mechanism design of a PoS protocol substantially increases the cost-of-corruption that any adversary would incur. 

In the following sections, we present the full analysis. For this analysis, we will consider a PoS blockchain consisting of $\numvalidators$ rational validators (with no Byzantine or altruistic validators)  in $\palg$ who are participating in a BFT protocol with $\frac{1}{3}$ adversary threshold. Let’s assume, for simplicity of calculation, that each validator has deposited an equal amount of stake $\stake$. Let us also define $\stot= \numvalidators \times\stake$.

\subsection{Token toxicity}
Token toxicity states that if a protocol gets successfully attacked, the underlying token would lose value. Many think that stakers would view this toxicity as a threat against compromising the security of the protocol. Contrary to this belief, we show in this section that token toxicity is not potent enough to deter adversarial attacks in some typical scenarios. In fact, the cost incurred by adversaries to attack and corrupt the protocol, referred to as cost-of-corruption, under such scenarios is essentially zero. 

Consider the scenario where $\frac{1}{3}$rd of the validators have joined hands. If the total valuation of the token, in which stake has been deposited, strictly depends on the security of the protocol, then any attack on the safety of the protocol can drive down its total valuation to zero. Of course, in practice, it will not be driven down all the way to zero but to some smaller value. But to present the strongest possible case for the power of token toxicity, we will assume here that token toxicity works perfectly.  The cost-of-corruption for any attack on the protocol is the total amount of tokens held by the rational validators who are attacking the system, who must be willing to lose all that value.

We will now analyze the incentives for collusion and bribing in a PoS blockchain with token toxicity without slashing. Suppose that the external adversary sets up the bribe with the following conditions:
\begin{itemize}
    \item If a validator executes upon the strategy as dictated by the adversary but the attack on the protocol was not successful, then the validator gets a reward $B_1$ from the adversary.
    \item If a validator executes upon the strategy as dictated by the adversary and the attack on the protocol was successful, then the validator gets a reward $B_2$ from the adversary. 
\end{itemize}

We can draw the following payoff matrix for a validator who has deposited stake $\stake$, and $R$ is the reward from participating in the PoS protocol:
\begin{center}
\begin{tabular}{ |c|c|c|c| } 
\hline
 & Attack not successful & Attack successful \\
\hline
\parbox{4cm}{A validator not taking up the bribe and not deviating from the protocol} & $S+R$ & 0 \\ 
\hline
\parbox{4cm}{A validator agreeing to take up the bribe} & $S+B_1$ & $B_2$ \\
\hline
\end{tabular}
\end{center}

Suppose that the adversary sets the bribe payoff such that $B_1>R$ and $B_2>0$. In such a case, accepting bribes from the adversary gives higher payoff than any other strategy the validator can take irrespective of the strategy other validators are taking (the dominant strategy). If $\frac{1}{3}$rd  of other validators end up accepting the bribe, they can attack the security of the protocol (this is because we assume we are using a BFT protocol whose adversary threshold is $\frac{1}{3}$) . Now, even if the present validator does not take the bribe, the token would anyway lose its value due to token toxicity (top right cell in the matrix). Therefore, it is incentive-compatible for the validator to accept the $B_2$ bribe. If only a small fraction of validators accept the bribe, the token won’t lose value, but a validator can benefit from forgoing the reward R and instead get $B_1$(left column in the matrix). In case of a successful attack where $\frac{1}{3}$rd of validators have agreed to accept the bribe, the total cost incurred by the adversary in paying out the bribes is at least $\frac{N}{3} B_2$;  this is the cost-of-corruption. However, the only condition on $B_2$ is that it has to be greater than zero and hence, $B_2$ can be set close to zero which would imply cost-of-corruption is negligible. This attack is known  as “$P+\epsilon$” attack \cite{pplusepsilon}.

One way of summarizing this effect is that token toxicity is insufficient because the impact of bad actions are socialized: token toxicity depreciates the value of token completely and affects good and bad validators equally. On the other hand, the benefit of taking the bribe is privatized and limited to only those rational validators that actually take the bribe. There is no  one-to-one consequence only for those taking the bribe, that is, the system doesn’t not have a working version of "karma."

\subsection{Slashing}
Slashing is a way for PoS blockchains to economically penalize a validator or a group of validators for executing a strategy that is provably divergent from the given protocol specification.  Typically, to enact any form of slashing, each validator must have previously committed some minimum amount of stake as a collateral. 

We concern ourselves primarily with the study of slashing mechanisms for safety violations such as double-signing, rather than for liveness violations. We suggest this restriction for two reasons: (1) safety violations are fully attributable in some BFT-based PoS protocols, but liveness violations are not attributable in any protocol, and (2) safety violations are usually more serious than liveness violations, resulting in loss of user funds rather than users unable to issue transactions.

Suppose an external adversary is able to bribe at least $\onethird$ of the validators in the PoS blockchain. With each validator having put up a stake of $\stake$ and all slashed stake is burnt, we get the following payoff matrix: 
\begin{center}
\begin{tabular}{ |c|c|c|c| } 
\hline
 & Attack not successful & Attack successful \\
\hline
\parbox{4cm}{A validator not taking up the bribe and not deviating from the protocol} & $S+R$ & $\stake$ \\ 
\hline
\parbox{4cm}{A validator agreeing to take up the bribe} & $S+B_1$ & $B_2$ \\
\hline
\end{tabular}
\end{center}
Observe that with slashing, if the validator agrees to take up the bribe and the attack is not successful, then its stake $\stake$ gets slashed in the canonical fork (lower left cell in the matrix), which is in contrast with the previous bribing scenario with token toxicity. On the other hand, a validator would never lose its stake $\stake$ in the canonical fork even if the attack is successful (top right cell in the matrix).  If it requires $\onethird$ of the total stake $\stot$ to be corrupted for the attack to be successful, we have
\begin{equation}
    \text{Cost-of-corruption} = \frac{N}{3}\times\stake = \frac{1}{3}\stot,
\end{equation}
which is substantially greater than the cost-of-corruption without slashing.

\subsection{Capturing weak subjectivity}
Even if PoS blockchains incorporate slashing of those validators that engage in reorgs by double-signing, validators can engage in a form of nothing-at-stake attack with impunity. More explicitly, consider a group of validators whose stake amounted to more than $\frac{2}{3}$ of $\stot$ long time back in the past in $\palg$ but have already exited from staking since then. These group of validators can now build a fork starting from an old block and given they control more than $\frac{2}{3}$ of $\stot$, these fork will be finalized. Even though it is apparent to everyone that this new fork is adversarial, unfortunately, they can't be slashed in the canonical fork as these validators have already exited. This implies cost-of-corruption is zero. 

One way that PoS blockchains like Ethereum and Tendermint have got around this problem is by weak subjectivity mechanism \cite{weaksubjectivity}. Under this mechanism, if a newly revealed fork $\mathcal{C}_{adv}$ builds upon a block which is considered to be sufficiently old (older than the most recent weak subjectivity period $\ws$), it won’t be accepted objectively by anyone actively running even a light client software.

\section{Computing profit-from-corruption}
\label{sec:pfc}
In this section, we sharpen our estimate on profit-from-corruption for PoS blockchains by formally defining the set of transactions which are vulnerable to chain reorgs. For any transaction $tx$, we define value of that transaction as $v(tx)$. First we make the observation that an adversary can confuse the social consensus $\psoc$ only if two forks are released in a near-simultaneous manner. In practice, if the second fork is released beyond a certain period, then the social consensus can coordinate to unequivocally converge upon considering the first fork to be the canonical. This period is represented as $\reversion$.  Defining $\txset(t,t+\reversion)$ to be the set of transactions that are being executed within a time interval $[t, t+\reversion)$, it is evident that these transactions are vulnerable to a chain reorg. However, those transactions that are purely atomic on-chain in nature and have no corresponding off-chain action(s) are safe from state reversion due to chain reorg.  Therefore, we can make even sharper bounds on profit-from-corruption by realizing that only those transactions that have corresponding off-chain actions  
are vulnerable.Examples of such transactions are exchanging ETH for fiat via CEXs, bridging assets from chain $A$ to chain $B$, etc. We represent the set of all such transactions that happen in the time interval $[t, t+\reversion)$ as $\hybtxset(t,t+\reversion)$. One way to reduce the size of the set $\hybtxset(t,t+\reversion)$ is by having the transactors follow a new confirmation rule that a transaction is considered to be confirmed if no conflicting fork appears within the reversion period since the block containing that transaction was finalized. Table 2 summarizes the profit-from-corruption under various adversarial safety violations.

\begin{table}[h!]
\begin{center}
\begin{tabular}{ |c|c|c| } 
\hline
\cellcolor{blue!25} Bounds  & \cellcolor{blue!25} Profit-from-corruption \\
\hline\hline
\parbox{6cm}{Stealing all value locked} & $\tvl$ \\ 
\hline\hline
\parbox{6cm}{Reorg within reversion period} & $\sup_{t > 0} \sum_{tx \in \Gamma (t,t+\reversion)} v(tx)$ \\ 
\hline\hline
\parbox{6cm}{Reorg against hybrid transactions within reversion period} & $\sup_{t > 0} \sum_{tx \in \hybtxset (t,t+\reversion)} v(tx)$\\
\hline\hline
\parbox{6cm}{Reorg against hybrid transactions within reversion period and $\secconf$} & $\sup_{t > 0} \sum_{tx \in \hybtxset (t,t+\reversion) \setminus \hybsecconftxset(t,t+\reversion)}  v(tx)$\\
\hline\hline
\end{tabular}
\caption{$\reversion$ represents the smallest time $t$ such that if the adversary releases its fork to reorg a  block $t$ time units after the block was finalized, then the social consensus $\psoc$ will be able to converge without ambiguity that this is an adversarial fork. $\txset(t,t+\reversion)$ is the set of transactions that are being executed within a time interval $[t, t+\reversion)$. $\hybtxset(t,t+\reversion)$ is the set of hybrid transactions whose off-chain actions are being executed immediately after they were finalized at some point in time in the interval $[t, t+\reversion)$. $\hybsecconftxset(t,t+\reversion)$ is the set of hybrid transactions whose off-chain actions are being executed according to the proposed secure confirmation rule $\secconf$ after they were finalized at some point in time in the interval $[t, t+\reversion)$.}
\end{center}
\label{tab:pfc}
\end{table}
In following sections, we delve deeper into each of these safety violations.

\subsection{Naive analysis}
\subsubsection{Stealing all assets locked}
Naively, one would guess that an adversary might as well drain out all total-value-locked ($\tvl$) with incorrect state transition. Therefore, we have the following condition that must be satisfied for a PoS blockchain to be secure:

\begin{equation}
    \frac{1}{3} \times \stot > \tvl.
\end{equation}

Considering Ethereum as the example of a PoS blockchain, however, Ethereum’s $\tvl$ is in the order of hundreds of billions of USD whereas total stake $\stot$ in securing the $\palg$ of Ethereum is in the order of tens of billions of USD. Clearly, the above security condition is not satisfied at all yet Ethereum remains safe and live. A similar conclusion can be drawn for other PoS blockchains too. Therefore, we have to capture a stronger bound on profit-from-corruption.

Observe that since validators running full node software are actively participating in the algorithmic consensus $\palg$, they have the access to the underlying blob of transactions and therefore, can always verify the state transition corresponding to the proposed block. This implies that no profit can be extracted out by an adversary by fooling any of these honest validators through incorrect state transition. These honest validators would obviously also raise alert in the social consensus $\psoc$, thus, enabling all honest clients not running full node software to become aware of the incorrect state transition. Observe that incorrect state transition is an objectively attributable fault and so the honest clients in $\psoc$ will objectively reject the fork containing the incorrect state transition. Consequently, one might be motivated to say that profit-from-corruption is $0$. This results in following {\bf erroneous security claim} for a PoS blockchain: that the amount of stake is not a consideration for determining the safety of a blockchain.

The above reasoning on profit-from-corruption doesn’t consider attacks via chain reorgs: adversarial validators can engage in reorging the finalized blocks to include double-spending transactions while also causing confusion around which fork to be considered “canonical”. While computing profit-from-corruption, we need to include the profits made from double-spending.

\subsection{Capturing reorgs: Reversion period}
We note that the weak subjectivity period is defined to be in the range of weeks or months \cite{weaksubjectivitycalc}. So if a competing block is released within a weak subjectivity period $\ws$ of a target block, then the validators can be slashed. In case that there are two competing forks, social consensus is needed to determine which fork gets built upon. The natural candidate for the fork to build upon (colloquially called as canonical fork) is the fork that gets released earlier. Of course if two forks are released in a near-simultaneous manner, then it is not possible to determine, even socially, which fork is the earlier \cite{vitaliksocialconsensus}. We call this time resolution as the reversion time - or in other words, if a block is produced and finalized, without having a competing fork beyond the reversion time, even if a competitive block is created adversarially and released beyond the reversion time, the social consensus $\psoc$ will ensure that it is the earlier block that gets built upon. We represent the duration of reversion period by $\reversion$. See fig.~\ref{fig:reversion-period} for illustration. The duration of the reversion period $\reversion$ can be much smaller than the duration of the weak subjectivity period $\ws$. 

From a game theoretic viewpoint, what would be ideal is that by the time social consensus $\psoc$ is able to converge on what fork should be considered “canonical”, stake of “enough” of the adversarial validators in $\palg$ involved in building the adversarial fork should still be active in the canonical fork. This will ensure that the adversarial validators can be penalized by slashing. What will be considered “enough” is that the total stake of these adversarial validators still active in the canonical fork should amount to more than $\frac{1}{3}$ of total stake $\stot$ currently active in the canonical fork. This is because, considering $\palg$ to be a BFT protocol with $\onethird$ adversarial threshold, adversary needs to control more than $\frac{1}{3}$ of $\stot$ to be able to do a successful equivocation.

Let $T_0$ be the time when the most recent finalized block in the canonical fork was observed to have been finalized. Table 3 illustrates the differences among finalization period $T_{fin}$, reversion period $T_{rev}$ and weak subjectivity period $\ws$.

\begin{table}
\begin{center}
\begin{tabular}{ |c|c|c|c| } 
\hline
\cellcolor{blue!25} \parbox{3.5cm}{When is the adversarial fork revealed?}  & \cellcolor{blue!25} Definition & \cellcolor{blue!25} \parbox{2.5cm}{Cost on social consensus?} & \cellcolor{blue!25} Slashability?
 \\
\hline\hline
$t \in [T_0, T_0+T_{fin})$ & \parbox{6cm}{Assuming single-slot finality and breakdown of underlying honest majority assumption, adversarial validators can engage in not giving their attestations to proposed blocks so as to algorithmically prevent any block from finalizing in the algorithmic consensus $\palg$. $T_{fin}$ is defined as the time when the tip of at least one of the fork finalizes as per the specification of $\palg$ since the last finalization of the common ancestor block.} & Infinite & \parbox{3cm}{No
(since no double signing needed for preventing blocks from getting finalized)}
\\ 
\hline\hline
$t \in [T_0+T_{fin}, T_0 + T_{fin} + T_{rev})$
& \parbox{6cm}{Suppose the tip of a fork has finalized at time $T_{fin}$. If there is a breakdown of the underlying honest majority assumption and an adversarial fork is publicly revealed to be finalized at time t, it is socially impossible for an honest validator to know without any ambiguity which fork is the “canonical” one.} & Infinite & \parbox{3cm}{Yes 
(double signing violation as per algorithmic consensus $\palg$)
}
 \\ 
\hline\hline
$t \in [T_0 + T_{fin} + T_{rev}, T_0+T_{ws})$ & \parbox{6cm}{Suppose the tip of a fork has finalized at time $T_{fin}$. If there is a breakdown of the underlying honest majority assumption and an adversarial fork is publicly revealed to be finalized at time $t$, the honest clients in the ecosystem can employ social consensus to converge on what is the canonical fork without any ambiguity.} & Finite & \parbox{3cm}{Yes 
(double signing violation as per algorithmic consensus $\palg$)}
\\ 
\hline\hline
$t \in [T_0+T_{ws}, \infty)$ & \parbox{6cm}{If an adversarial fork is publicly revealed to be finalized at time $t$, the honest validators can just trivially use their local information to discard the adversarial fork.} & None & \parbox{3cm}{No
(no stake left to be slashed under long-range attack)}\\
\hline\hline
\end{tabular}
\caption{Summary of different resolution periods.}
\end{center}
\end{table}

Suppose that a client in $\psoc$ running a light client waits until the end of the reversion period after it hears about the finalization of the block containing a certain transaction before the client considers the transaction to be confirmed. If the adversary publicly reveals a new fork $\advfork$ after the reversion period is over, the social consensus would be able to detect the adversarial fork. Consequently, no honest client would respect this fork $\advfork$ and no honest validator would propose or attest blocks on top of that fork $\advfork$ and therefore, the client would be safe from double-spending.  Therefore, profit-from-corruption should only include the total value that transacted within the reversion period. Hence, the security condition that needs to be satisfied for any PoS blockchain to be considered secure is given by: 
\begin{equation}
    \frac{1}{3} \times \stot > \sup_{t > 0} \sum_{tx \in \txset (t,t+\reversion)} v(tx). 
\end{equation}

\begin{figure}[!b]
     \centering
     \begin{subfigure}[b]{0.55\textwidth}
         \centering
         \includegraphics[width=\textwidth]{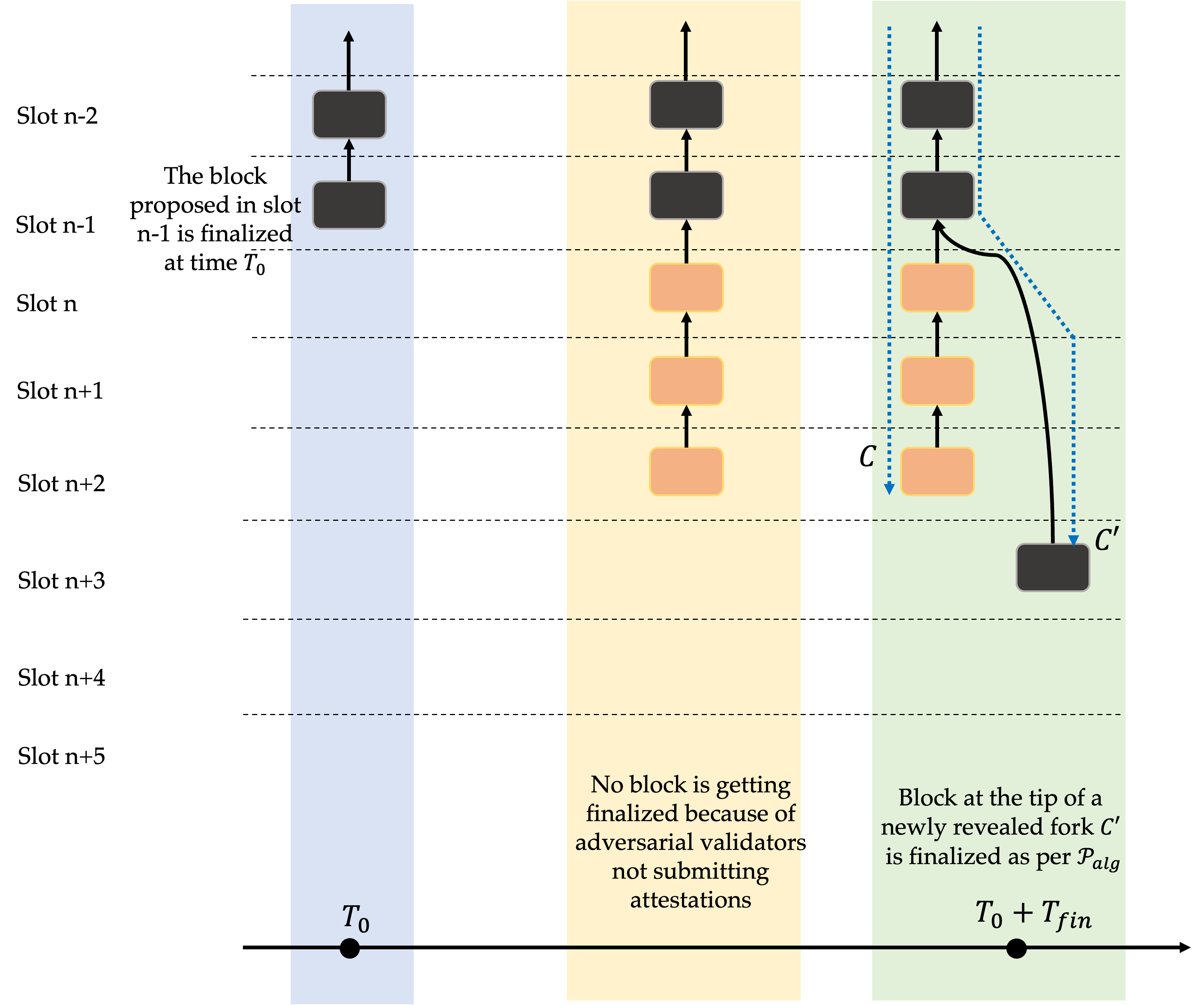}
         \caption{With more than $\frac{1}{3}$ of total stake being adversarial, the adversary can withheld their attestations.}
         \label{fig:y equals x}
     \end{subfigure}
     \hfill
     \begin{subfigure}[b]{0.75\textwidth}
         \centering
         \includegraphics[width=\textwidth]{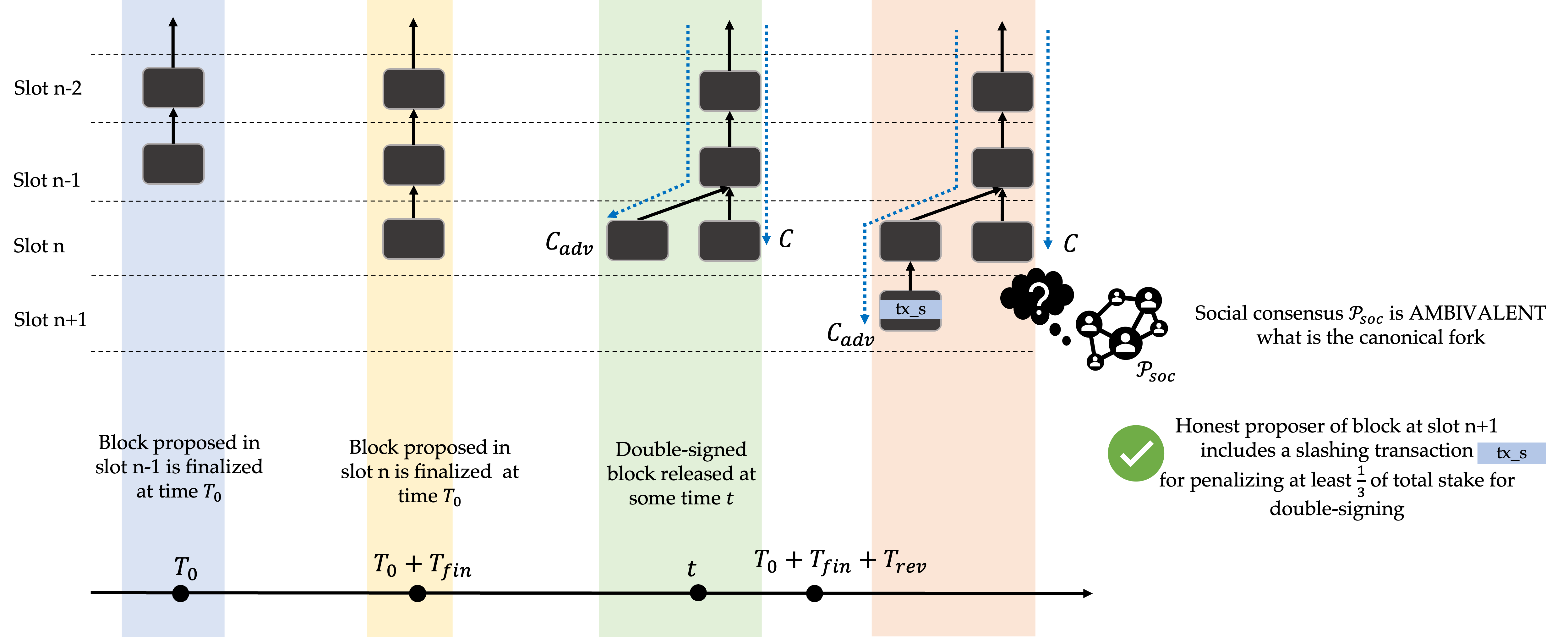}
         \caption{Given $\mathcal{C}_{adv}$ is revealed before $T_0 + \reversion$, honest participants in $\psoc$ are ambivalent which fork is the canonical one. }
         \label{fig:three sin x}
     \end{subfigure}
\end{figure}

\begin{figure}[ht]\ContinuedFloat
     \centering
     \begin{subfigure}[b]{0.75\textwidth}
         \centering
         \includegraphics[width=\textwidth]{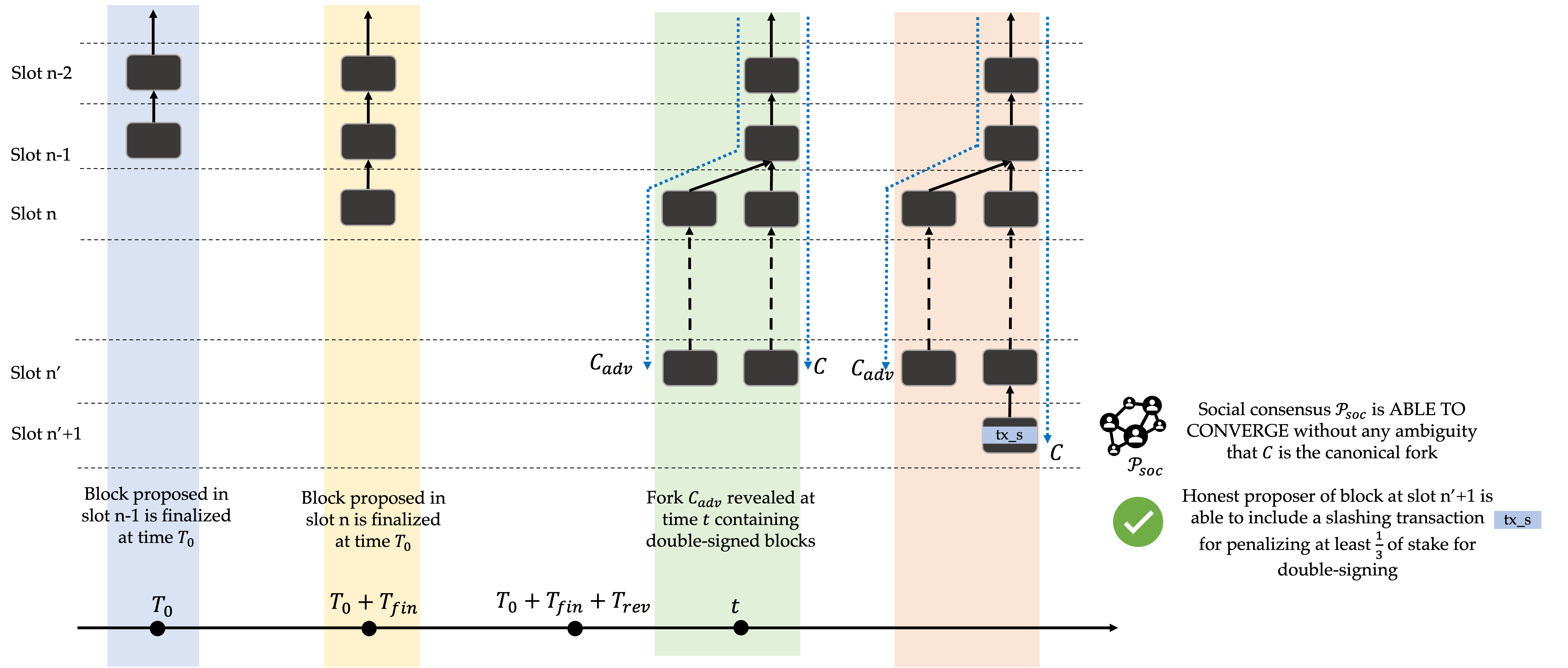}
         \caption{Given $\mathcal{C}_{adv}$ is revealed after $T_0 + \reversion$, honest participants in $\psoc$ are able to converge without any ambiguity which fork is the canonical one. }
         \label{fig:reversion-period}
     \end{subfigure}
     \hfill
     \begin{subfigure}[b]{0.75\textwidth}
         \centering
         \includegraphics[width=\textwidth]{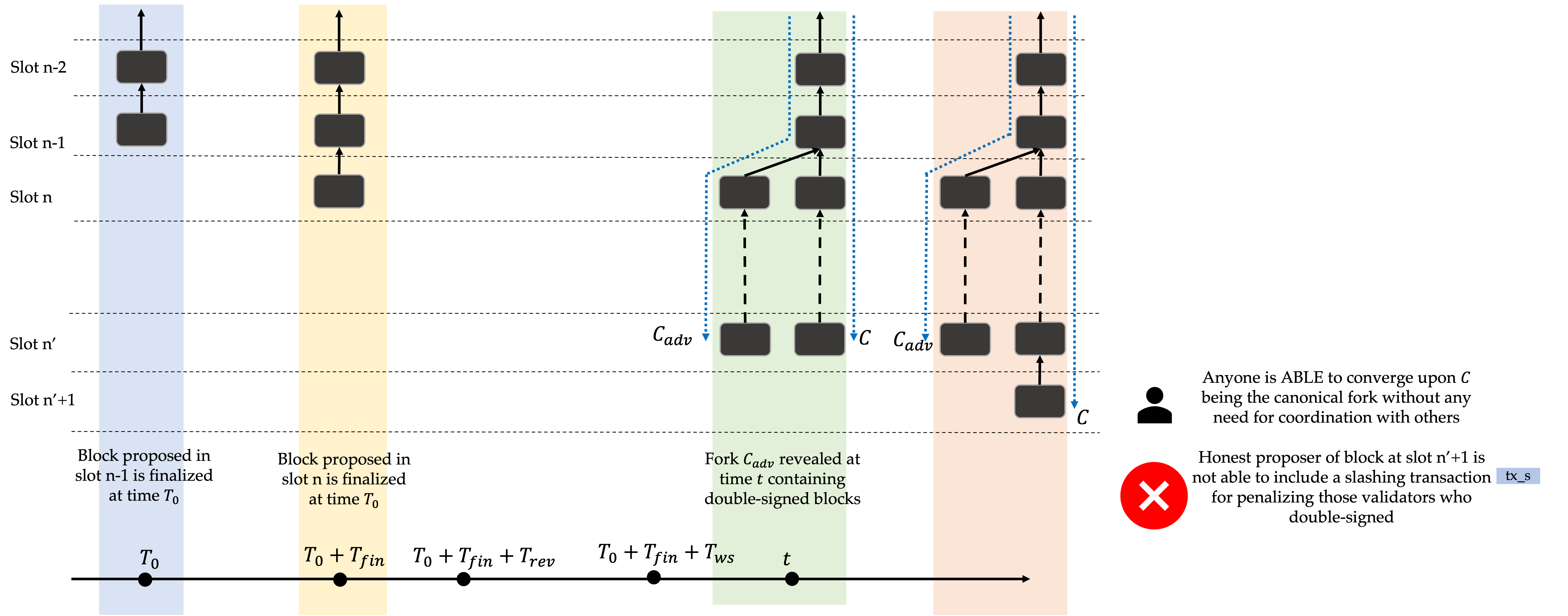}
         \caption{Given $\mathcal{C}_{adv}$ is revealed after $T_0 + T_{ws}$, honest participants in $\psoc$ are able to converge without any ambiguity which fork is the canonical one. However, computation of $T_{ws}$ is such that $\frac{2}{3}$ of the stake that is present in $\mathcal{C}_{adv}$ cannot be slashed in canonical fork as they have already exited in $C$ by the time double-signing is identified.}
         \label{fig:five over x}
     \end{subfigure}
     \caption{An illustration of different resolution periods.}
     \label{fig:three graphs}
\end{figure}

In theory, the total volume transacted within the reversion period could be even more than the TVL. In such a scenario, clearly the above security condition won’t hold, which is contrary to what is happening in practice. Therefore, we need a sharper reasoning on what constitutes profit-from-corruption.  

Chain reorgs particularly affect hybrid transactions where one person receives assets in physical or a separate digital platform from another person in exchange for sending digital assets in this blockchain. Hybrid transactions result in both on-chain and off-chain state changes. Examples of hybrid transactions are token-fiat exchange transactions in centralized exchanges (CEXs) and cross-chain transactions over bridges. What can happen is that the fork that ended up being considered to be “canonical” features the on-chain digital state caused by the above exchange to get reverted.  However, the  corresponding physical state doesn’t get reverted atomically. These hybrid transactions are in contrast to pure transactions where reorgs result in state changes that are limited to only on-chain state updates in the blockchain only. Any chain reorg in the blockchain reverts back all these state updates atomically. We represent the family of hybrid transactions as $\hybtxset$ and set of hybrid transactions from time $t_1$ to time $t_2$ as $\hybtxset(t_1, t_2)$.

Coming back to our estimation of profit-from-corruption, given the absence of the escape hatch of off-chain state changes getting reverted if a transaction belonging to set $\hybtxset$ gets reorged out, we should only consider the maximum value that has been transacted via transactions in the set $\hybtxset(t, t+\reversion)$ for any time $t>0$.  See Fig. 4 for an illustration. Hence, the security condition that needs to be satisfied for a PoS blockchain to be secure is that:
\begin{equation}
    \frac{1}{3} \times \stot > \sup_{t > 0} \sum_{tx \in \hybtxset (t,t+\reversion)} v(tx). 
\end{equation}
Currently, total volume of transactions of type $\hybtxset$ is dominated by CEXs and bridge liquidity. So, to determine a very sharp estimate of profit-from-corruption in PoS blockchains, we just need to determine the outflow of total value via CEXs and bridges within the reversion period.

\subsection{Reducing profit-from-corruption}
From previous section, we have the following security condition for when a PoS blockchain can be considered to be secure:
\begin{equation}
    \frac{1}{3} \times \stot > \sup_{t > 0} \sum_{tx \in \hybtxset (t,t+\reversion)} v(tx). 
\end{equation}
Clearly, to reduce the profit-from-corruption, two antecedents needs to be satisfied:  
\begin{itemize}
    \item $\reversion$ should be as close as possible to $0$,
    \item Total volume transacted via transactions of type $\hybtxset$ should be as minimal as possible.
\end{itemize}
In this section, we will investigate how to achieve both the above antecedents without doing any surgical modification to the underlying algorithmic consensus $\palg$ of a PoS blockchain.

\subsubsection{Get $\reversion$ to be close to $0$}
Recall that if there is a breakdown of the underlying honest majority assumption and an adversarial fork is publicly revealed to be finalized at time $t$, where $T_0 +T_{fin} \leq t < T_0+ T_{fin} + \reversion$, it is socially impossible for honest clients in the social consensus $\psoc$ to converge about which fork being “canonical”. The reason behind social impossibility arises because of a lack of a strong enough social coordination that can enable honest clients to come to a convergence as soon as possible. Ideally, we want  $\reversion$ to be as close to $0$ as possible.

Only known decentralized way to achieve a strong social coordination is via a network of clients running light clients that can:
\begin{itemize}
    \item monitor for block headers of finalized blocks,
    \item perform data availability sampling (DAS) of the blobs behind the block headers \cite{al2021fraud}, and 
    \item communicate with each other via some channel (say Twitter, Discord, etc.,) to share evidence, etc.
\end{itemize}
Having a lot of clients running light clients participate in monitoring for finalized block headers ensures that any progress in the canonical fork is detectable by all honest clients in the PoS blockchain (even wallets) as soon as possible. This ensures even if an adversarial fork is revealed to be finalized afterwards (owing to breakdown of honest majority assumption in the algorithmic consensus protocol), all honest clients in the social consensus $\psoc$ have already converged upon what should be considered the canonical fork. However, just detecting finalized block headers is not sufficient as the underlying blob for that block might not be available. Therefore, clients operating a light client software must also perform DAS. 

A social coordination consisting of a strong network of clients operating light clients ensures there is a very small interval of time within which a powerful adversary (that is, controls a fraction of stake beyond adversarial threshold) can confuse honest clients in social consensus $\psoc$ in coming to conflicting views on what should be the canonical fork. Therefore, stronger this social coordination is, more closer $\reversion$ is to $0$.

\subsubsection{Minimizing total value transacted via transactions belonging to the set $\hybtxset$}
\label{sec:sec-conf-hyb}
Even if $\reversion$ is close to $0$, there might be lot of value in the transactions belonging to the set $\hybtxset$ which are getting finalized within a time period of length $T_{fin}$. Typically with respect to transactions of type $\hybtxset$, clients employ the confirmation rule that as long as the block containing the transaction is finalized in the algorithmic consensus $\palg$, the transaction’s corresponding off-chain action will be executed. Naturally, what is necessary is the minimization of execution of off-chain actions corresponding to transactions of type $\hybtxset$ before the elapse of the reversion window of length $\reversion$ that started right after the client gets notified that the block containing the transaction has been finalized.

Keeping our above requirement in mind, we propose a new confirmation rule:
\begin{itemize}
    \item[] \textbf{\underline{Secure Confirmation Rule ($\secconf$)}}. Assume that the block $B$ containing a transaction $tx$ belonging to the set $\hybtxset$ is finalized. From the perspective of the client who has to take the corresponding off-chain action, suppose that no other adversarial fork is revealed to be finalized at a time $t$ that is within the reversion window of length $\reversion$ that started right after the block $B$ containing transaction $tx$ was finalized. Then, the client can execute the corresponding off-chain action. 
\end{itemize}

We can now make the following claim.
\begin{lemma}
    Confirmation rule $\secconf$ is unconditionally secure.
\end{lemma}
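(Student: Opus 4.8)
The plan is to show that any client following $\secconf$ never sees its confirmed transaction reverted, and --- crucially --- that this holds without any assumption on the adversary's share of the stake, which is what ``unconditional'' should mean here (in contrast to the earlier security conditions of the form $\frac{1}{3}\stot > \text{profit-from-corruption}$, which presuppose honest majority). First I would fix notation: let $T_B$ be the time at which the client observes the block $B$ containing $tx$ to be finalized. By the rule, the client executes the off-chain action only at some time $\tau \ge T_B + \reversion$, and only under the standing hypothesis that no conflicting finalized fork was revealed during the window $[T_B, T_B + \reversion)$. The goal is then to prove that, conditioned on this hypothesis, the block $B$ (and hence $tx$) stays in the canonical fork forever.

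The core of the argument is a dichotomy on when a conflicting adversarial fork $\advfork$ that would revert $tx$ could ever be revealed, combined with the defining property of $\reversion$. Any such $\advfork$ is revealed either inside the window $[T_B, T_B + \reversion)$ or at/after $T_B + \reversion$. The first case is excluded by the very hypothesis under which the client acts --- the rule forbids execution if a conflicting fork appears inside the window. For the second case I would invoke the definition of $\reversion$ (Tables 2 and 3): a fork revealed at least $\reversion$ after finalization is recognized unambiguously by the social consensus $\psoc$ as adversarial, so no honest client respects it and no honest validator extends it. Hence $\advfork$ cannot displace the fork containing $B$ as the canonical one, $tx$ is never orphaned, and the client suffers no loss. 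As a corollary the equivocating validators remain slashable in this regime per the third row of Table 3, though slashing is not needed for the safety conclusion itself.

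Next I would make explicit why this is unconditional. Nowhere in the argument do I assume that the adversary controls less than $\frac{1}{3}$ of $\stot$, nor do I compare cost-of-corruption against profit-from-corruption; the only ingredient is the social-consensus resolution guarantee built into the definition of $\reversion$, which holds by construction for any fork revealed beyond the reversion window. Thus even under a complete breakdown of the honest-majority assumption in $\palg$, a client obeying $\secconf$ is safe --- the security is a property of the confirmation rule itself rather than of the economic configuration of the chain.

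The step I expect to be the main obstacle is the timing alignment between the client's locally observed finalization time $T_B$ and the finalization time against which $\reversion$ is defined. If network delay or clock skew makes the client observe finalization late, the window it actually measures could be shorter than the true reversion window, breaking the appeal to the definition of $\reversion$. I would close this gap by either stipulating that the client's observation of finalization is prompt (equivalently, defining $\reversion$ relative to the client's own observation) or by folding a worst-case observation delay into $\reversion$, so the measured window always contains the true one. A secondary point to state carefully is that ``no conflicting fork in the window'' must mean no conflicting \emph{finalized} fork: given the assumptions on $\palg$ (BFT with single-slot finality), the only way to revert a finalized $tx$ is through an equivocating finalized fork, so transient non-finalized reorgs do not threaten the client.
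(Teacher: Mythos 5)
Your proof is correct and follows essentially the same route as the paper's: both rest on the defining property of $\reversion$, namely that any conflicting fork revealed after the reversion window is unambiguously recognized as adversarial by the social consensus $\psoc$, so the fork containing the transaction remains canonical forever and no honest-majority assumption on $\palg$ is needed, which is exactly the $0$-of-$N$ sense of ``unconditional'' the paper invokes. Your additional care about the client's locally observed finalization time and about restricting attention to conflicting \emph{finalized} forks makes the argument tighter than the paper's terse version, but it is the same approach, not a different one.
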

\begin{proof}
    The proof is simple. Suppose the client doesn’t observe any adversarial fork within the reversion window. If the client followed the above confirmation rule $\secconf$ before executing the off-chain action corresponding to the transaction belonging to the set $\hybtxset$, then it is guaranteed that all other honest clients in the blockchain’s ecosystem have unconditionally made up their mind that the fork containing this transaction is the canonical fork. Hence the on-chain state changes due to the transaction will never be reverted even if an adversarial fork is revealed to be finalized after the reversion window has passed. See fig.~\ref{fig:sec-conf} for an illustration. Hence, the secure confirmation rule $\secconf$ is unconditionally secure (or as called as $0$-0f-$N$ in \cite{buterin2022proof}). 
\end{proof}

\begin{figure}
    \centering
    \includegraphics[width=0.9\textwidth]{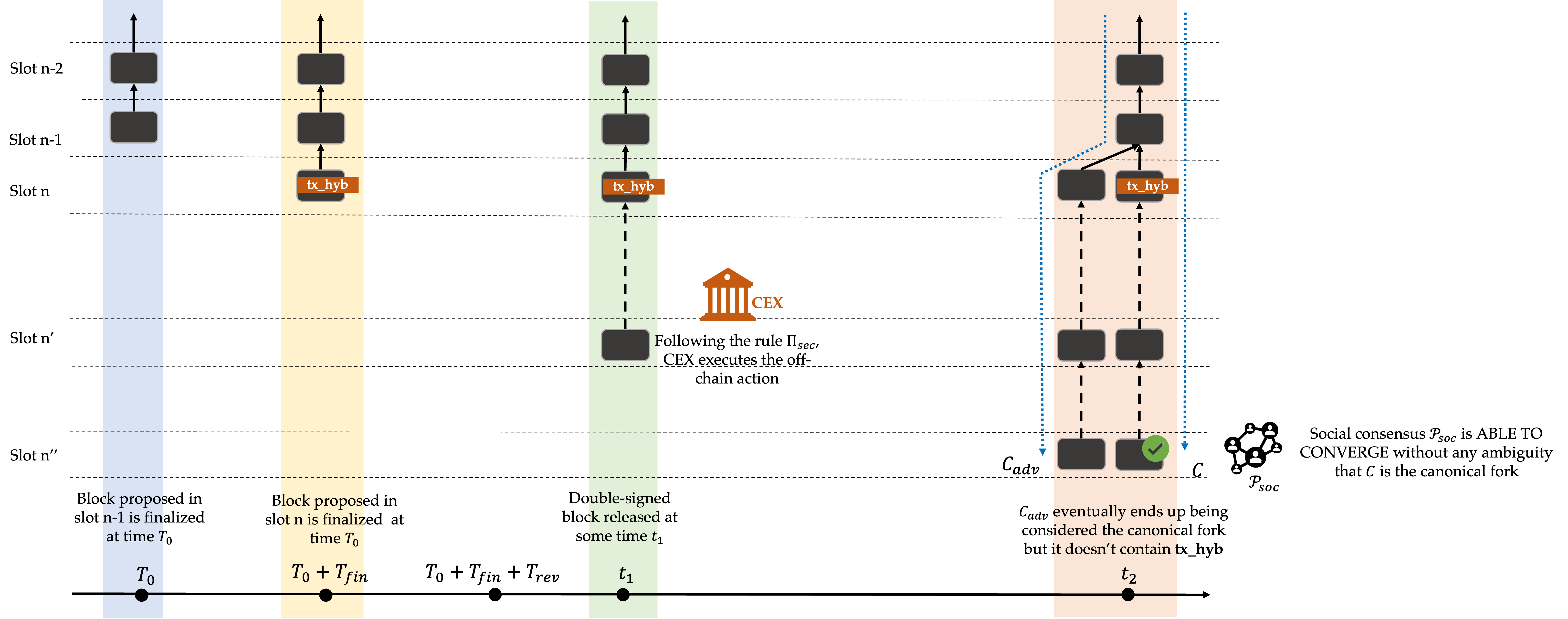}
    \caption{The secure confirmation rule $\secconf$ requires that the CEX executes the off-chain action corresponding to the transaction tx\_hyb after the reversion window has passed. This ensures that the fork that ends up being considered canonical by the social consensus $\psoc$ will have the transaction tx\_hyb.}
    \label{fig:sec-conf}
\end{figure}

Observe that the secure confirmation rule doesn’t require any changes to a PoS blockchain’s underlying algorithmic consensus $\palg$. This new confirmation rule is purely a modification to clients running light client software.

The implication of this secure confirmation rule is that any client following this rule will not end up in a scenario where it has executed the off-chain action but its corresponding on-chain state change got reverted. Let $\hybsecconftxset$ represent the set of transactions from the set $\hybtxset$ whose off-chain actions are being executed by clients by following the rule $\secconf$. Hence, with this secure confirmation rule $\secconf$, we have the following condition for when a PoS blockchain can be considered to be secure:
\begin{equation}
    \label{eq:sec-conf-hyb}
    \frac{1}{3} \times \stot > \sup_{t > 0} \sum_{tx \in \hybtxset (t,t+\reversion) \setminus \hybsecconftxset(t,t+\reversion)}  v(tx). 
\end{equation}

\subsubsection{Implementing secure confirmation rule}
As mentioned before, we have two major categories of clients that execute off-chain action emitting from transactions belonging to $\hybtxset$: (1) CEXs and (2) bridges. Let us see how they can implement this confirmation rule $\secconf$.
\newline

\noindent\underline{CEXs} \newline
It is straightforward for CEXs to follow the confirmation rule $\secconf$: CEXs will transfer the fiat money to a customer only after the reversion period for the corresponding on-chain deposit transaction has passed.
\newline

\noindent\underline{Bridges} \newline
We will consider light client bridges from a PoS blockchain A to another PoS blockchain B. Drawing analogy from how CEXs can implement the confirmation rule $\secconf$, one might say that light client smart contract in the receiving chain $B$ will accept a finalized block header $BH$ of chain $A$ if no conflicting finalized block header of chain $A$  is posted to chain $B$ within reversion window after the posting of block header $BH$ in chain $B$ (assuming $\mathcal{O}(1)-of-N$ trust assumption \cite{buterin2022proof}). 

Unfortunately, this doesn’t take into consideration the possibility of censorship in the receiving chain $B$. Assuming censorship-resistant window of the receiving chain $B$ to be $\crw$, what can happen is that the transaction containing the conflicting finalized block header of chain $A$ is censored in the receiving chain $B$ by a window of at max $\crw$. It is possible that $\crw > \reversion$. Clearly if the light client contract in the chain $B$ immediately confirms the block header $BH$ after the reversion window has passed, then we might end up in a situation where the corresponding on-chain state of chain $A$ has been reverted. 

Hence, the secure confirmation rule in case of bridges is:
\begin{itemize}
    \item[] \underline{\textbf{Secure Confirmation Rule for Bridges ($\secbri$)}}. After the transaction containing the block header $BH$ of a finalized block on chain $A$ is posted in the receiving chain $B$, the light client bridge smart contract in chain $B$ should wait for a window of length $\reversion + \crw$ before accepting $BH$ as the block header corresponding to a finalized block in the canonical fork. If no other block header for a finalized block from a conflicting fork of chain $A$ is posted by the end of this window in chain $B$, accept $BH$ as the block header corresponding to the finalized block in the canonical fork of chain $A$. 
\end{itemize}
It is apparent that in order for having better capital efficiency for bridges that are implementing the secure confirmation rule $\secbri$, it is important to have shorter censorship-resistant window $\crw$. 
\section{STAKESURE: A new economic mechanism for Staking insurance}
\label{sec:insurance}
Let us identify $\setclient$ as the exhaustive set of all transactors that are responsible for executing off-chain action corresponding to some transaction in the set $\hybtxset$. For any transactor $\client \in \setclient$, let $\clienthybtxset(t,t+\reversion)$ represents the set of transactions in $\hybtxset(t,t+\reversion)$ whose off-chain action is being executed by the transactor $\client$ without following the rule $\secconf$. Then, we have
\begin{equation}
    \hybtxset (t,t+\reversion) \setminus \hybsecconftxset(t,t+\reversion) = \bigcup_{\client \in \setclient}\clienthybtxset(t,t+\reversion)
\end{equation}
Now, we can rewrite the cryptoeconomic safety condition from Eq.~\ref{eq:sec-conf-hyb} as
\begin{equation}
    \frac{1}{3} \times \stot > \sup_{t > 0} \sum_{\client \in \setclient} \left(\sum_{tx \in \clienthybtxset(t,t+\reversion)}v(tx)\right)
\end{equation}
Unfortunately, in a decentralized system, evaluating $\left(\sum_{tx \in \clienthybtxset(t,t+\reversion)}v(tx)\right)$ for any transactor $\client$ is dependent on the transactor honestly disclosing off-chain actions of how many transactions from the set $\hybtxset(t,t+\reversion)$ did it execute without following the rule $\secconf$. But without credible commitment, the transactor can easily indulge in obfuscating this information to create over leverage. Therefore, we need an information signalling mechanism so that profit-from-corruption can be precisely evaluated.

We observe that the definition of cryptoeconomic safety does not really guarantee that a transaction user enjoys unconditional safety, rather it only says that an attacker does not derive profit from the attack. However, in complex scenarios, it is possible that an attacker may attack out of pure malice or other reasons, and a honest transactor is affected. We therefore define a stronger notion of cryptoeconomic safety as follows:
\begin{itemize}
    \item[] \underline{\textbf{Strong cryptoeconomic safety.}} No honest user of the system suffers any loss of funds. 
\end{itemize}
As noted in the Introduction, we remind that this is a much stronger definition than the definition of cryptoeconomic safety. While cryptoeconomic safety ensures that there is no incentive for an adversary to attack, a malicious adversary may still go ahead and attack the system which will lead to honest users in the system suffering without recourse. In contrast, in a system with strong cryptoeconomic safety, this can never happen.

In this section, we design and show a mechanism, called as STAKESURE, which achieves this stringent property while also solving for the information signalling problem from the previous section.

\subsection{Key ideas}
We first highlight the key ideas required to construct the mechanism before delving into details of the mechanism. 

\subsubsection*{1: What to do with the slashed funds?: Allocate to insurance!}
In existing PoS blockchains with slashing, the slashed funds are simply burnt. Instead slashed funds can be used to compensate for the harmed victims. This is the key principle that we build upon to obtain our requisite mechanism. 

\subsubsection*{2: Who should we allocate the slashed funds to?: Let users self-buy insurance!}
\begin{enumerate}
    \item There is an auction mechanism which runs on chain, which is used by transactors to buy up insurance for the upcomign period. 
    \item How much insurance to buy? Each transactor should buy up enough insurance to cover their damage in case that there is a reorg attack. Thus the insurance mechanism provides a way for transactors to convey their truthful value, otherwise, in case of an attack, they will be left without coverage.
    \item We note that transactors who have purely on-chain atomic transactions, may not care about reversion as in any valid state, they still do not suffer any significant loss. Thus transactions not in the set $\hybtxset$ are likely to not need any insurance.
    \item We note that transactors who use the secure confirmation rule $\secconf$ also do not need insurance, as they wait for their transaction to be confirmed beyond doubt of reversion.
\end{enumerate}

\subsubsection*{3: When to sell insurance?: Early enough so that insurance itself cannot be reverted!}
\begin{enumerate}
    \item In a system where attacks can happen, even the insurance transactions may be reverted. So we need a mechanism to wherein every epoch (which is set equal to a reversion period), there is an auction mechanisms which auctions out the slashed funds if slashing happens in the epoch two epochs ahead. This pre-allocation of insurance is important as it ensures that the holder of the insurance has a strong guarantee that the insurance transaction itself cannot be reverted. This principle also clarifies why the epoch is set equal to the reversion period.
    \item If epoch $e:=[t,t+\reversion)$, the amount of insurance that a transactor $\client$ would buy is based on its expectation on how much value would flow via transactions belonging to the set $\clienthybtxset(t,t+\reversion)$. Hence, the profit-from-corruption in epoch $e$ is precisely given by $\sum_{\client \in \setclient} \left(\sum_{tx \in \clienthybtxset(t,t+\reversion)}v(tx)\right)$. 
\end{enumerate}

\subsubsection*{4: Rational transactor policy.}
It is in the economic interest of a rational transactor to obtain enough insurance so that the value of the insurance they hold is greater than the funds at risk. We can expect that transactors with significant exposure will cover their risk by holding enough insurance. For example, exchanges, and bridges.

\subsubsection*{5: Self-scaling security.} 
Since rational transactors only transact if they have enough coverage, automatically the total cryptoconomic load on the system will be smaller than the total insurance coverage available, which is $\frac{\stot}{3}$. Thus even if only a smaller amount of stake is in the system, the system remains completely unconditionally safe. It is only the liveness of the honest transactors that get affected, i.e., they may have to wait to obtain insurance in order to transact. But this increases the insurance rate that such transactors may be willing to pay, in turn, increasing the total amount staked (as there is now enough return available for more stake). Thus staking in the system sets itself automatically to the right level of security rather than being controlled by an arbitrary preallocated rewards curve

\subsubsection*{6: Two pending issues}
\begin{enumerate}
    \item Small, irrational transactors.
    \begin{enumerate}
        \item However, it is possible that smaller transactors may not have the foresight to buy insurance or may simply risk their funds (trying to freeride on the assumed safety of the system). We need to make sure that there is enough cryptoeconomic buffer in the system for these transactors to exist. 
        \item We define $\hybuninstxset(t, t+\reversion)$ to be the set of transactions in $\hybtxset (t,t+\reversion) \setminus \hybsecconftxset(t,t+\reversion)$ for which the transactor hasn't brought insurance. Then, the total uninsured cryptoeconomic load is given by  
        \begin{equation*}
            \sup_{t > 0} \sum_{tx \in \hybuninstxset(t, t+\reversion)}  v(tx).
        \end{equation*} 
        \item We need to make sure that there is enough cost-of-corruption to protect against these small transactors, even though they do not have any insurance.
    \end{enumerate}
    \item Grieving attacks.
    \begin{enumerate}
        \item It is possible for an adversary to be the majority staker as well as buyer of all insurance, so that the adversary prevents any honest user from transacting in the system (as there is no insurance protection for them).
    \end{enumerate} 
\end{enumerate}

\subsubsection*{7: Resolving the issues}
    \begin{enumerate}
        \item The way to solve both these problems is to allocate a certain amount of slashed fund to be purely burnt rather than to be allocated to insurance. This ensures some non-zero cost on grieving attacks. Furthermore, if the amount allocated to be burnt is greater than the benefit from attacking the  small / irrational transactors, then there is no incentive to attack them. 
    \end{enumerate}

\subsection{Protocol description}
In this section, we will put together the ideas from the previous section to describe the STAKESURE mechanism.

\subsubsection*{Staking and funds for insurance coverage }
Whenever a new validator joins the PoS blockchain by staking a stake $\stake$ in the underlying consensus protocol, $\gamma$ fraction of this stake is earmarked for underwriting the risk under the proposed insurance mechanism. In return, the validator receives an insurance premium from anyone who buys this insurance. Note that here $\gamma$ is a system parameter. Now, this validator has three revenue streams: (1) reward issuance earned from staking a stake $S$, (2) MEV extractions from manipulation of transaction ordering, (3) premium earned from providing an insurance coverage of  worth $\gamma S$.

\subsubsection*{Optimistic path}
We partition the time horizon into epochs of length $\reversion$. Let $\sins(e)$ denotes the stake that is available in epoch $e$ for purchase by transactors in order to get insurance in epoch $e+2$. For epoch $e$, we set $\sinsavail(e) := \min\left\{\sins(e), \frac{\gamma\stot}{3}\right\}$. In epoch $e$, an auction will be conducted wherein any transactor can make bids for purchasing insurance coverage from this available stake $\sinsavail(e)$. If a transactor is able to win the bid for a stake of size $s$, this means that the transactor is guaranteed an insurance coverage for value transacted amounting to worth $s$. More explicitly, if the transactor is optimizing for fast UX by not following the secure confirmation rule $\secconf$ in executing the off-chain action corresponding to a transaction belonging to the set $\hybtxset$, then the transactor can bid for the insurance. If the transactor wins the bid, it can use the insurance for underwriting the risk of prematurely executing the off-chain action.

A natural question is when is the transactor permitted to buy insurance coverage for its transaction that the transactor expects to be executed on-chain in the epoch $e+2$. Observe that the transactor can execute the corresponding off-chain action immediately after the transaction has been finalized on-chain only if it is assured by the beginning of epoch $e$ that a social consensus has been achieved on a canonical fork which confirms its purchase of insurance. This means that the transactor must have bought the insurance at least one reversion period $\reversion$ before the beginning of the epoch $e$, that is, by end of epoch $e-2$. Therefore, insurance purchase rule is as follows:
\begin{itemize}
    \item[] \underline{\textbf{Insurance purchase rule ($\inspur$)}}.  For insurance coverage in epoch $e+2$, a transactor must make a bid from out of the available stake $\sinsavail(e)$  at epoch  $e$.  
\end{itemize}

Let $\clientinspur(\client, e)$ denote the total amount of insurance that has been bought by a transactor $\client$ in epoch $e$.  The transaction confirmation rule that the clients would follow in epoch $e+2$ is:
\begin{itemize}
    \item[] \underline{\textbf{Confirmation rule with insurance ($\inssec$)}}. Suppose by the beginning of epoch $e+2$, the transaction associated with the purchase of insurance by clients in epoch $e$ is confirmed to be part of the fork that is considered to be “canonical” by the social consensus. Let $\Gamma_{\client, e+2}$ be the set of transactions belonging to $\hybtxset$ whose off-chain action has been executed immediately by the transactor $\client$ right after they were finalized in epoch $e+2$. The condition that the transactor must ensure is that 
    \begin{align*}
        \sum_{tx \in \Gamma_{\client, e+2}} v(tx) < \clientinspur(\client, e).
    \end{align*}
\end{itemize}

For any transaction in the set $\hybtxset$ that is finalized by end of epoch $e+2$ and if there is no double-signing attack by the end of epoch $e+3$ (equivalently, within one reversion period  $\reversion$), then that transaction is recognized to be included in the fork which is considered to be “canonical” by the social consensus. Consequently, the stake being used for providing insurance coverage to the transactors is released and is up for auction at the beginning of epoch $e+4$. See fig.~\ref{fig:insurance} for an illustration for this insurance process.

\begin{figure}
    \centering
    \includegraphics[width=0.9\textwidth]{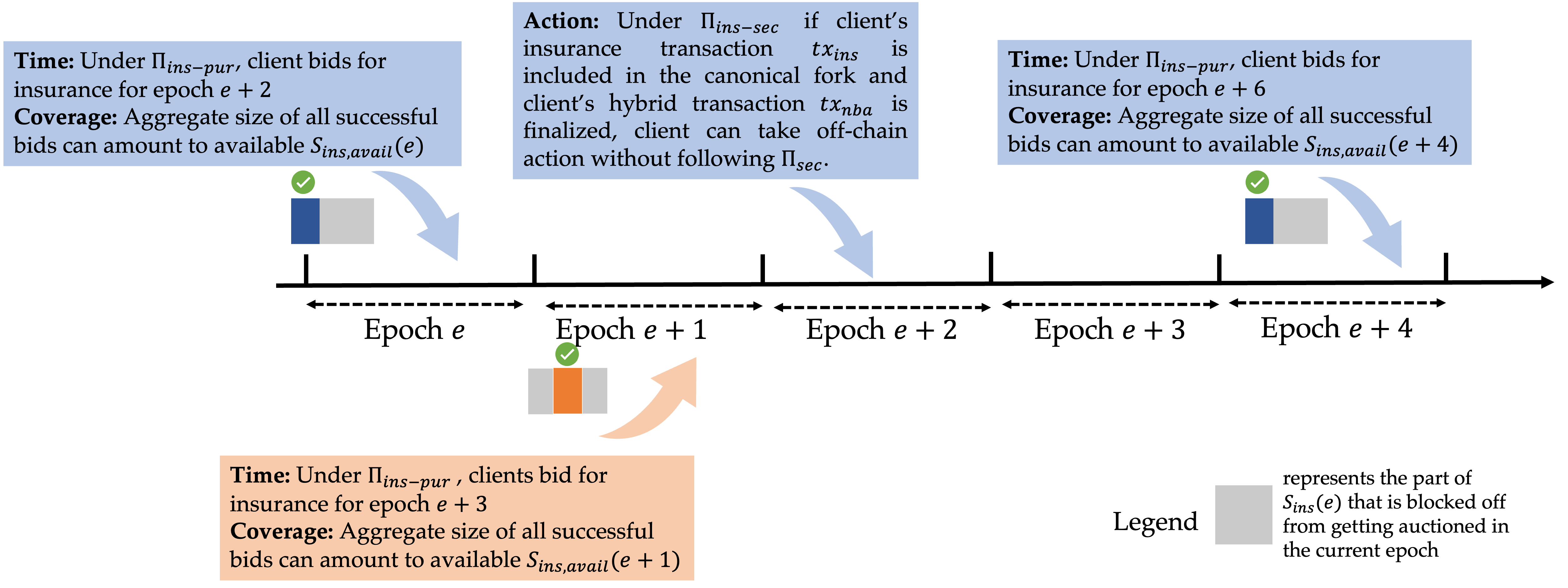}
    \caption{An illustration of the insurance mechanism under optimistic path.}
    \label{fig:insurance}
\end{figure}

\subsubsection*{Pessimistic path}
Following the confirmation rule $\inssec$, if by the beginning of epoch $e+2$, the transactor is not able to win the insurance bid (due to censorship or lack of enough insurance coverage), then the transactor is recommended to follow the secure confirmation rule $\secconf$ for executing the off-chain action corresponding to its transaction belonging to the set $\hybtxset$ that got finalized on-chain in epoch $e+2$.

Suppose that before the end of epoch $e+3$, the system is attacked by the adversary via double signing on finalized blocks. As soon as double-signing is detected, the transactors are recommended to follow the secure confirmation rule $\secconf$ for all future epochs until a social consensus is achieved that the attack is over. Eventually by some future epoch $e'$, the attack will be over and social consensus will converge upon what should be considered to be the canonical fork. From epoch $e'$, any transactor who doesn’t plan to follow the secure confirmation rule $\secconf$ rule but expect to have a transaction of type $\hybtxset$ getting executed on-chain in epoch $e'+2$ will bid for buying insurance from those stake in $\sinsavail(e')$ which are available for bidding.

We suggest a small modification to the slashing rule that is nominally used for penalizing double signing in PoS blockchains like Ethereum and Tendermint:
\begin{itemize}
    \item[] \underline{\textbf{Penalty for double signing ($\Pi_{ds}$)}}. For penalizing the validator (with stake $\stake$) who participated in double signing,  $\gamma$ fraction of the stake is used for paying out the insurance to the transactors who bought insurance from the validator. Remaining $1-\gamma$ fraction is burnt just like in usual slashing for double signing.  
\end{itemize}

\subsection{Consequences of STAKESURE}
Some major consenquences of the insurance scheme are illustrated here.
\begin{enumerate}
    \item Perfectly secure bridging across different chains is possible!  
    \begin{itemize}
        \item In a recent article by V. Buterin \cite{vitalikbridge}, it was highlighted that bridging is prone to the fundamental risk that the two domains (source domain and destination domain) can move out of sync - one concrete context where this can happen is if there is a $51$\% attack in the sending side of the bridge. At any moment, the bridge has to purchase enough long-term insurance under STAKESURE that is enough to secure against all the tokens in flight at that moment. This quantifies the economic cost in terms of premium paid for subscribing to STAKESURE for building a cryptoeconomically safe bridge. What we have presented is an initial concept on building bridges with strong cryptoeconomic safety and further research is needed to flush out the complete design.
    \end{itemize}
    \item Insured light clients.
    \begin{itemize}
        \item Since checking whether a client has enough insurance is possible to do via a light node, we can have insured light clients, i.e., nodes with very little computational abilities that can operate with very high security (in fact, strong cryptoeconomic safety). 
    \end{itemize}
    
    \item Self-scaling security: System works even without emissions.
    \begin{itemize}
        \item Existing blockchains, including Ethereum, design a staking rewards which allocates a certain amount of new tokens to be printed and given as rewards. This is to ensure that the system has enough cryptoeconomic security. STAKESURE ensures that the system can automatically find out how much cryptoeconomic security is needed by looking at how much insurance is needed and allocate it. If the total amount staked is not sufficient to satisfy the insurance demand, then automatically the price of insurance increases, increases the staking returns, which incentivizes new stake to enter the system
    \end{itemize}

    \item Pricing the right dimension
    \begin{itemize}
        \item Existing fee mechanisms depend on the assumption that the demand of computation of the system exceeds the computational capacity of the system, in order to form any non-zero price. In a blockchain with infinite computational resources (or even resources far exceeding demand), the price will be very small. In our proposal, we propose to separate the insurance price from the congestion price (which is paid for computational congestion), and even when congestion price goes to zero, the insurance price in STAKESURE remains non-zero, as long as there is demand for computation. Furthermore, imagine a simple transfer transaction worth billions of dollars, this may be very simple computationally, but imposes too much cryptoeconomic risks on the system, so it is only appropriate that the transactor pays for the cryptoeconomic risks created. 
    \end{itemize}

 \item {Relative volatility as a key consideration of the denomination of insurance}
    \begin{itemize}
        \item It is ideal if the asset that is being covered by insurance be in the same denomination as that of the denomination in which insurance claim will be paid. For example, if a buyer buys insurance against loss of ETH, then the denomination in which insurance claims are paid are also in ETH. If they are not in the same denomination, then future volatility between different denominations has to be considered in obtaining the insurance. Thus as more blockchain-native markets price in ETH, it is much better for ETH to be the insuring asset, rather than some other token of value. The same effect holds in the other way as well: when most assets are priced in USD stablecoins, then having USD staked may serve better in terms of insurance cryptoeconomics (there may be other effects such as sovereignty and forkability that need to be considered). 
    \end{itemize}

\end{enumerate}

\subsection{Comparison with traditional insurance}
Relative to traditional insurance mechanisms, there is a variety of differences that enable this proposed insurance mechanism to operate with a high degree of reliability. 
\begin{enumerate}
    \item \textbf{Not statistical.} Traditional insurance schemes make use of statistical multiplexing when deciding how much fund they should keep in reserve for meeting insurance claims at any point in time. For example, car insurance companies make a statistical estimate about how many car accidents will happen among the insurance buyers. Depending on that estimate, car insurance companies keep in reserve sufficient funds to meet the  insurance claims that will be made in a year. This leads to the insurance providers making over-promise on how much claims they can compensate for and are vulnerable to risk of collapse during black swan events, especially when there are correlated failures. Our proposed insurance mechanism gives a deterministic guarantee that even if all transactors who have subscribed to insurance get harmed, there is enough funds to pay out the compensation to all those transactors.
    \item \textbf{A closed loop of Karma.} Traditional insurance is not a closed system. What this means is that the person underwriting the risk via insurance is not the person who is causing the harm (for example, in the case of motor accident insurance, the accident is caused by someone else but insured by the insurance agency). In case of PoS blockchains where clients are not following the secure confirmation rule $\secconf$, the entities who can cause harm are the validators who are participating in the underlying algorithmic consensus. In our proposed insurance mechanism, validators staked in the algorithmic consensus $\palg$ are the entities who offer insurance, ensuring that there is a closed loop of Karma.
    \item \textbf{No moral hazard.} The proposed insurance mechanism underwrites risk for only safety violations where the claim that harm has been inflicted on the insured is objectively provable. So there is no moral hazard in the insurance, where a party claims insurance by falsely claiming to be hurt (as they will never be able to provide a certificate of violation by another party). 
    \item \textbf{Grieving imposes penalties.} Burning $(1-\gamma)$ fraction of slashed funds ensures that an adversary cannot grieve in our proposed insurance mechanism by being both the staker and the insurance buyer. In such a case, there is a guaranteed  
\end{enumerate}

\section{Conclusion}
In this work, we have formally introduced an economic model for analyzing the cryptoeconomic safety of a PoS blockchain. This involved computing minimum cost to corrupt a blockchains and comparing it with the maximum profit that can be extracted from corrupting a blockchain. We showed that incorporating slashing in the PoS blockchain enhances the cost-of-corruption. We also derived sharper bounds on  profit-from-corruption which required identifying the set of transactions that are impacted by chain reorgs. Finally, we introduced an on-chain insurance mechanism, called as STAKESURE, that guarantees strong cryptoeconomic safety: no honest user of the system suffers any loss of funds. 

\bibliographystyle{unsrt}
\bibliography{references}
\end{document}